\newcommand\muchlarger[1]{%
\mathlarger{\mathlarger{\mathlarger{\mathlarger{#1}}}}
}
\newcommand\jonesBox[1]{%
  \ooalign{\raisebox{0.3pt}{$\muchlarger{\Box}$}\cr\;\raisebox{2.4pt}{\tiny$\tiny{#1}$}}
}
\newcommand\jonesDiamond[1]{%
  \ooalign{\raisebox{-0.7pt}{$\muchlarger{\Diamond}$}\cr\;\raisebox{2.4pt}{\tiny$\tiny{#1}$}}
}
\newcommand\plainDiamond{\mathlarger{\mathlarger{\Diamond}}}
\DeclareMathOperator{\pii}{pi}
\DeclareMathOperator{\va}{va}
\DeclareMathOperator{\vp}{vp}
\DeclareMathOperator{\Ought}{Ought}
\DeclareMathOperator{\viol}{viol}
\newcommand{\CJ}{CJ}
\newcommand{\SA}{SA}
\newcommand{\DD}{DD}
\newcommand{\FD}{FD}
\theoremstyle{plain}
\newtheorem{thm}{Theorem}
\newtheorem{cor}[thm]{Corollary}
\newtheorem{lem}[thm]{Lemma}
\newtheorem{rem}[thm]{Remark}
\newtheorem{res}[thm]{Result}
\newcommand{\norm}[1]{\|#1\|}
\begin{document}
	
	\setlength{\pdfpageheight}{\paperheight}
	\setlength{\pdfpagewidth}{\paperwidth}

	\title{Models of the Chisholm set}
	\author{Bj{\o}rn Kjos-Hanssen\footnote{
	This is a term paper for \emph{Filosofi hovedfag: spesialomr\aa de 1}, 1996, under the advisement of Andrew J.I.~Jones,
	translated from Norwegian.
	}}
	\date{November 11, 1996}
	\maketitle

	\begin{abstract}
		We give a counter-example showing that Carmo and Jones' condition 5(e) may conflict with the other conditions on the models in their paper
		\emph{A new approach to contrary-to-duty obligations}.
	\end{abstract}

	\tableofcontents
	\section{Deontic logic}
		Deontic logic (the logic of duties) has been studied more or less intensively
		throughout history. Given the emergence of modern modal logic one began to
		use logical systems to formalize valid logical deduction in connection
		with norms and obligations.

		We shall denote systems with capital letters like K, schemata with parenthesized capital letters like (K),
		and particular sentences with italicized letters like $A$.\footnote{
			This is consistent with Carmo and Jones' notation. It differs from Chellas' notation:
			he writes systems like $K$ in italics, and both schemata and particular sentences like A in roman.
		}

		The smallest normal modal logic system, K, is the smallest expansion
		of propositional logic with an operator ${\jonesBox{}}$
		(and a dual operator $\plainDiamond = \neg\jonesBox{}\neg$)
		having the axiom schema (K) and rule of inference (RN).
		\[
			\tag{K}\jonesBox{} (A \rightarrow B) \rightarrow (\jonesBox{} A \rightarrow \jonesBox{} B)
		\]
		\[
			\tag{RN}\vdash A \implies \vdash\jonesBox{} A
		\]
		Standard deontic logic KD is K expanded by the following axiom schema.
		\[
			\tag{D}\jonesBox{} A \rightarrow \plainDiamond A
		\]
		A standard model of modal logic is a structure $(W, R, \norm{\cdot})$ where $W$ is
		the set of possible worlds, $R$ is a binary relation on $W$, and
		$\norm{A} = \{\,x \in W {:} \models_x A\,\}$.
		A model of standard deontic logic is then
		a standard model of modal logic in which $R$ is serial,
		i.e., there is an ideal alternative $y$ for each world $x$,
		which in turn means that $\models_x\plainDiamond\top$,
		which in normal systems is equivalent
		to $\models_x \mathrm{D}$. This expresses that duties cannot be mutually contradictory.
		Indeed, if we employ an operator $\bigcirc$ for \emph{it is mandatory that}
		and assume that there is
		a proposition $A$ such that $\models_x \bigcirc A \land \bigcirc\neg A$,
		we obtain that in all ideal alternatives to $x$,
		$A\land\neg A$ holds, i.e., there is no ideal alternative to $x$.
		Since the symbol $\jonesBox{}$ is often reserved for concepts of necessity,
		one introduces a special symbol
		for deontic necessity, duty: $\bigcirc$, with the dual operator $\texttt{P}$ (permissibility).

		In a 1963 article \cite{Chisholm1963-CHICIA-2}, Chisholm pointed out that standard models are insufficient for
		deontic logic in the sense that they, as we shall see, necessarily lead to paradoxes.

		In this article we shall discuss Chisholm's paradox and a solution proposed by
		Carmo and Jones \cite{CJ96} in 1996. We shall call their system \CJ.

	\newpage
	\section{Contrary-To-Duty imperatives}
		Contrary-To-Duty (CTD) imperatives (or breach of duty imperatives) are duties
		that take effect when an ideal obligation has been violated.
		It was these that Chisholm showed can not be formalized in standard deontic logic.

		Chisholm's set of propositions is
		\begin{enumerate}[(a)]
			\item A certain man ought to go to help his neighbor.
			\item It ought to be the case that if he goes,
				he tells the neighbor that he is coming.
			\item If he does not go, he ought not to say that he is coming.
			\item He is not going.
		\end{enumerate}

		A formalization should take into account that intuitively this set is consistent
		and logically independent.
		If, as Chisholm, one formalized it in standard deontic logic
		KD, where the operator
		$\jonesBox{}$ 
		or $\bigcirc$ is understood as ``it is duty that'', as
		\[
			\bigcirc G\tag{a}
		\]
		\[
			\bigcirc(G  \rightarrow  S)\tag{b}
		\]
		\[
			\neg G \rightarrow \bigcirc\neg S \tag{c}
		\]
		\[
			\neg G\tag{d}
		\]
		then we can deduce
		$\bigcirc S$ (from (a) and (b)) and
		$\bigcirc\neg S$ (from (c) and (d)) which implies
		$\bigcirc\bot$ which via schema $D$ gives
		$\texttt{P}\bot$, which in turn yields that there is a world in which $\bot$,
		and we obtain a contradiction.
		Thus (a), (b), (c) and (d) cannot all hold. If we try using
		\[
			\tag{b'} ( G \rightarrow \bigcirc S)
		\]
		the set is no longer independent, as (b') follows from (d). If instead we put
		\[
			\tag{c'}
			\bigcirc (\neg G \rightarrow \neg S)
		\]
		we again do not retain independence, as (c') follows from (a).

		A reformulation which avoids the explicit reference to actions, and therefore
		often is preferable, is Prakken and Sergot's \cite{MR1400924} dog scenario:

		\begin{enumerate}[(a)]
			\item One ought not to have any dog.
			\item If one does have a dog, one ought to have a warning sign.
			\item One does have a dog.
		\end{enumerate}

		The fact that the Chisholm set cannot be formalized in standard deontic logic KD
		is traditionally called Chisholm's paradox. It is paradoxical only if one considers
		standard deontic logic as correct. A better name for it could thus be
		Chisholm's problem. Since the problem was posed by Chisholm in 1963,
		one of the greatest
		challenges for deontic logicians has been to find a solution: an alternative to
		standard deontic logic in which the Chisholm set can be formalized adequately.

		\subsection{The pragmatic oddity}
			In a precursor of \cite{CJ96},
			Jones and P\"{o}rn \cite{MR811630}] made a distinction between ideal and
			sub-ideal alternatives to a given world,
			using two relations $R$ and $R'$ in the semantics
			and corresponding normal strong operators $\jonesBox{i}$
			and $\jonesBox{s}$. One defines
			\[
				\jonesBox{\rightarrow} A   \iff \jonesBox{i} A  \land \jonesBox{s} A
			\]
			and
			\[
				\Ought A \iff \jonesBox{i} A \land \jonesDiamond{s} \neg A
			\]
			the latter in order to incorporate violability.
			One requires that the disjoint relations
			$R$ and $R'$ be serial, and that their union be reflexive.
			This implies that each world
			has ideal and sub-ideal alternatives, and is an ideal or sub-ideal alternative
			of itself.

			The formalization of the Chisholm set becomes
			\[
				\tag{a}  \Ought H 
			\]
			\[
				\tag{b}  \jonesBox{\rightarrow}  (H  \rightarrow  \Ought S)
			\]
			\[
				\tag{c}  \jonesBox{\rightarrow}  ( \neg  H \rightarrow  \Ought  \neg  S )
			\]
			\[
				\tag{d}  \neg  H 
			\]
			where
			\begin{itemize}
				\item $H =$ He goes to help his neighbor,
				\item $S =$ He lets the neighbor know that he is coming.
			\end{itemize}
			As was pointed out by Prakken and Sergot, we now get a subtle paradox which
			we may call the pragmatic oddity: We can deduce from (c) and (d) that
			$\Ought \neg  S$, and from (a) and (b) that
			$\jonesBox{i}\Ought S$. So in all ideal alternatives to the given world,
			he goes to help the neighbor, and has a duty to say that he is coming,
			but does not say that he is coming.

			We must be aware of the danger that the distinction between actual and ideal
			duties induces an infinite sequence of operators; one could end up with
			$\bigcirc_i= \bigcirc_1$ and then $\bigcirc_2, \bigcirc_3, \dots$ as a prioritized sequence of
			duties with varying degree of actuality. This problem arose in an article by
			Carmo and Jones \cite{CJ96}. In the dog scenario we may view the problem as follows: If
			there is a dog, no sign, and no fence, then what is the actual duty;
			to get rid of the dog, to put up a fence, or to put up a fence?

			The main idea in \cite{CJ96} is to incorporate
			the difference between the (f)actually possible and
			the potentially possible.
			It is an actual duty to put up a fence only if it is a factual
			necessity that there is a dog and no sign. Thus one avoids
			the pragmatic oddity and the infinite sequence of operators.
		\newpage
		\subsection{Requirements for any formalization}
			Seven reasonable requirements of any formalization of the dog scenario,
			put forward in \cite{CJ96}, are
			\begin{enumerate}
				\item Consistency and absence of moral conflict.
					The propositions in the dog scenario are
					consistent and do not contain any moral conflict;
					preferably one should not
					have a dog, and if that fails one should put up a sign;
					is this also violated then one ought to
					put up a fence.
				\item Logical independence. None of the propositions are redundant;
					the set is \emph{finite} in the sense that
					it is not equivalent to any proper subset of itself.
				\item The same logical form for contrary-to-duty conditionals
					as for other  deontic conditionals.
					Carmo and Jones point out that anything else makes a proposition's
					logical form depend on contingent facts (what happens to be duty)
					--- which is on a collision course with the existing paradigm of logic.
				\item Actual duties should be derivable.
					Depending on how inflexible the situation is,
					the actual duty should be to ensure either $\neg H$, $S$ or $G$.
				\item Ideal duties should be derivable.
					It is reasonable that it should be potentially possible for there
					to be no dog, so there should be an ideal duty to make sure
					that there is no dog.
				\item A duty has been violated, given that there is a dog.
					Thus we must have $\viol(\neg H)$.
				\item In no sense should the pragmatic oddity,
					that a sign ought to warn against a non-existent dog, appear.
			\end{enumerate}

			These are the requirements Carmo and Jones wanted to fulfill
			with the system we shall call \CJ.
	\newpage
	\section{The formal system \CJ}
		Let us write $\mathscr P$ for power set.
		\paragraph{Semantics.}
			$M =\langle W, \va, \vp, \pii, V \rangle$, where
			\begin{itemize}
				\item $W \ne \emptyset$,
				\item $V$ is a valuation
					(which we might just as well have called $\norm{\cdot}$),
				\item $\va : W \rightarrow \mathscr P(W)$,
				\item $\vp : W \rightarrow \mathscr P(W)$,
				\item $\pii : \mathscr P(W) \rightarrow \mathscr P(\mathscr P(W))$, and
				\item $w \in \va(w) \subseteq \vp(w)$.
			\end{itemize}
			(The real world is an actually possible world, and
			each actually possible world is a potentially possible world.)
			\begin{equation}
				\label{cj1}
				\emptyset \not\in \pii(X)
			\end{equation}
			(The contradictory context is not obligatory.)
			\begin{equation}
				\label{cj2}
				(Y \cap X = Z \cap X) \rightarrow
				(Y \in \pii(X) \leftrightarrow Z \in \pii(X))
			\end{equation}
			(Relevance: whether or not a context $Y$ is obligatory in a context $X$
			depends only on $X \cap Y$.)
			\begin{equation}
				\label{cj3}
				(Y \in \pii(X) \land Z \in \pii(X)) \rightarrow (Y \cap Z \in \pii(X))
			\end{equation}
			($\pii(X)$ is closed under intersection.)
			\begin{equation}
				\label{cj4}
				(X \subseteq Y \subseteq Z) \land (X \in \pii(Y )) \rightarrow
				(Z \setminus Y ) \cup X \in \pii(Z)
			\end{equation}
			With $\norm{B} \subseteq \norm{A} \subseteq W$ and
			\[
				A  \Longrightarrow_o B\quad
				=_{\text{df}}\quad
				\norm{B}  \in \pii( \norm{A} ),
			\]
			(\ref{cj4}) tells us that
			\[
				A  \Longrightarrow_o B
			\]
			implies
			\[
				\norm{A  \rightarrow  B}  \in \pii(W).
			\]
			It follows from conditions (\ref{cj1}) -- (\ref{cj4})) that
			\[
				X \in \pii(Y ) \land X \in \pii(Z) \Rightarrow X \in \pii(Y \cup Z).
			\]

		\paragraph{Truth conditions.}

			\begin{eqnarray*}
				\models_w \jonesBox{\rightarrow} A &\Longleftrightarrow&
				\va(w) \subseteq \norm{A}\\
				\models_w \jonesBox{} A &\Longleftrightarrow&
				\vp(w) \subseteq \norm{A}\\
				\models_w \bigcirc_a A &\Longleftrightarrow&
				\norm{A} \in \pii(\va(w)) \land \va(w) \cap \norm{\neg A} \ne \emptyset\\
				\models_w \bigcirc_i A &\Longleftrightarrow&
				\norm{A} \in \pii(\vp(w)) \land \vp(w) \cap \norm{\neg A} \ne \emptyset\\
				\models_w \bigcirc (B\mid A) &\Longleftrightarrow&
				\norm{B} \cap \norm{A} \ne \emptyset
				\land (\forall X \subseteq \norm{A})\\
				&&(X \cap \norm{B} \ne \emptyset \rightarrow \norm{B} \in \pii(X))\\
			\end{eqnarray*}

			We also define violation by $\viol(A)\Longleftrightarrow \bigcirc_i A \land \neg A$.

			\subsection{Proofs of results within the system}
				\begin{rem}[\texorpdfstring{\cite[Section 4.4, item 3]{CJ02}}{}]
					$\models \jonesDiamond{\rightarrow}$A $\rightarrow \plainDiamond$ A.
				\end{rem}
				\begin{proof}
					\[
						\va(w) \subseteq \vp(w),
					\]
					so
					\[
						(\exists w \in \va(x))(\models_w A) \implies (\exists w \in \vp(x))(\models_w A).\qedhere
					\]
				\end{proof}

				\begin{res}[Restricted factual detachment, \texorpdfstring{\cite[Section 4.4, item 14]{CJ02}}{}]\label{Resultat 1}
				\[
					\tag{a}
					\models
					\bigcirc(B\mid A)
					\land\jonesBox{\rightarrow}A
					\land\jonesDiamond{\rightarrow}B
					\land\jonesDiamond{\rightarrow}\neg B
					\rightarrow\bigcirc_a B
				\]
				\[
					\tag{b}
					\models
					\bigcirc(B\mid A)
					\land\jonesBox{}A
					\land\plainDiamond B
					\land\plainDiamond\neg B
					\rightarrow \bigcirc_iB
				\]
				\end{res}
				\begin{proof}
					(a) From
					\[
						\jonesDiamond{\rightarrow} \neg B
					\]
					we obtain
					\[
						\va(w) \cap \norm{\neg B} = \emptyset.
					\]
					From
					$\jonesBox{\rightarrow}$A and
					$\jonesDiamond{\rightarrow}$B we obtain

					\[
						\va(w) \subseteq \norm{A} \land\va(w)\cap \norm{B} = \emptyset
					\]
					and thus we obtain from $\bigcirc (B\mid A)$ that $\norm{B}\in \pii(\va(w))$,
					and we are done. (b) is similar.
				\end{proof}

				\begin{res}[\texorpdfstring{\cite[Section 4.4, item 15]{CJ02}}{}]\label{Resultat 2}
					\[
						\tag{a}
						\models
						\bigcirc(B\mid A)
						\land\jonesDiamond{\rightarrow}(A\land B)
						\land\jonesDiamond{\rightarrow}(A\land\neg B)
						\rightarrow\bigcirc_a(A\rightarrow B)
					\]
					\[
						\tag{b}
						\models
						\bigcirc(B\mid A)
						\land\plainDiamond(A\land B)
						\land\plainDiamond(A\land \neg B)
						\rightarrow \bigcirc_i(A \rightarrow B)
					\]
				\end{res}
				\begin{proof}
					(a) We use the following tricks: It follows from (2) in the semantics that
					\[
						\norm{A \rightarrow B} \cap\va(w) \in \pii(\va(w)) \iff \norm{A \rightarrow B} \in \pii(\va(w))
					\]
					since
					\[
						( \norm{A \rightarrow B} \cap\va(w))\cap\va(w) = ( \norm{A \rightarrow B} ) \cap \va(w).
					\]
					Thus it suffices to show that
					\[
						\norm{A \rightarrow B} \cap \va(w) \in\pii(\va(w)).
					\]
					Let $X = \norm{A} \cap \va(w)$. Since then $X \cap \norm{B} = \emptyset$ and $X \subseteq \norm{A}$, we obtain
					from $\bigcirc (B\mid A)$ that $\norm{B}\in\pii(X)$. Since
					$( \norm{B} ) \cap X = ( \norm{B} \cap X) \cap X$
					we have $\norm{B} \cap X \in \pii(X)$, i.e.,
					\[
						\norm{A} \cap \norm{B} \cap \va(w) \in \pii( \norm{A} \cap \va(w)).
					\]
					From (4) in the semantics we now get
					\[
						(\va(w)\setminus \norm{A} )\cup( \norm{A} \cap \norm{B} \cap\va(w))\cap\va(w) \in \pii(\va(w))
					\]
					which after some distribution of $\cap$ and $\cup$ gives
					\[
						\norm{\neg A \lor B} \cap \va(w) \in \pii(\va(w))
					\]
					and
					\[
						\norm{\neg A \lor B} \in \pii(\va(w)).
					\]
					(b) is similar.
				\end{proof}
				\begin{res}[Restricted deontic detachment \texorpdfstring{\cite[Result 2(vii) page 294]{CJ02}}{}]
					\[
						\tag{a}
						\models\bigcirc_aA\land\bigcirc(B\mid A)\land\jonesDiamond{\rightarrow}(A\land B)\rightarrow\bigcirc_a (A\land B)
					\]
					\[
						\tag{b}
						\models\bigcirc_iA\land\bigcirc(B\mid A)\land\plainDiamond (A \land B) \rightarrow \bigcirc_i (A \land B)
					\]
				\end{res}
				\begin{proof}
					Similar to the proof of Result \ref{Resultat 2}. The system does not have unlimited deontic
					detachment,
					\[
						\bigcirc_aA \land \bigcirc (B\mid A) \rightarrow \bigcirc_a B.\qedhere
					\]
				\end{proof}

				\begin{res}[Strong violability, \texorpdfstring{\cite[Section 4.4, item 12]{CJ02}}{}]
					\[
						\tag{a}
						 \models \jonesBox{\rightarrow} A \rightarrow ( \neg \bigcirc_aA \land \neg \bigcirc_a \neg A)
					\]
					\[
						\tag{b}
						 \models \jonesBox{} A \rightarrow ( \neg \bigcirc_iA \land \neg \bigcirc_i \neg A)
					\]
				\end{res}
				\begin{proof}
					(a) We have $\va(w) \cap \norm{\neg A} = \emptyset$ and thus $\neg \bigcirc_aA$. Furthermore, we have
					\[
						\norm{\neg A} \cap\va(w) = \emptyset \cap \va(w)
					\]
					and thus
					\[
						\norm{\neg A} \in \pii(\va(w)) \iff \emptyset \in \pii(\va(w)) \iff \bot.
					\]
					(b) is similar.
				\end{proof}

				\begin{res}[Strong classicality, \texorpdfstring{\cite[Section 4.4, item 13]{CJ02}}{}]
					\[
						\tag{a}
						 \models \jonesBox{\rightarrow} (A \leftrightarrow B) \rightarrow (\bigcirc_a A \leftrightarrow \bigcirc_a B)
					\]
					\[
						\tag{b}
						 \models \jonesBox{} (A \leftrightarrow B) \rightarrow (\bigcirc_i A \leftrightarrow \bigcirc_i B)
					\]
				\end{res}
				\begin{proof}
					(a) Since $\va(w) \cap \norm{A} = \va(w) \cap \norm{B}$ we obtain
					\[
						\norm{A} \in \pii(\va(w)) \iff \norm{B} \in\pii(\va(w))
					\]
					from the semantics. (b) is similar.
				\end{proof}

				\begin{cor}[Strong classicality II, \texorpdfstring{\cite[Result 2(vi) page 293]{CJ02}}{}]
					\[
						\tag{a}
						 \models \jonesBox{\rightarrow} A \rightarrow (\bigcirc_a B \rightarrow \bigcirc_a (A \land B))
					\]
					\[
						\tag{b}
						 \models \jonesBox{} A \rightarrow (\bigcirc_a B \rightarrow \bigcirc_a (A \land B))
					\]
				\end{cor}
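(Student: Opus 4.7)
The plan is to derive this corollary directly from the preceding result on Strong classicality, using the fact that on $\va(w)$ (respectively $\vp(w)$), if $A$ holds everywhere, then $B$ and $A \land B$ coincide.

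First I would unpack the hypothesis in (a): from $\models_w \jonesBox{\rightarrow} A$ we have $\va(w) \subseteq \norm{A}$. For any $x \in \va(w)$ this gives $x \in \norm{A}$, so $x \in \norm{B}$ if and only if $x \in \norm{A \land B}$. Hence $\va(w) \subseteq \norm{B \leftrightarrow A \land B}$, which is precisely $\models_w \jonesBox{\rightarrow}(B \leftrightarrow A \land B)$.

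Then I would invoke Strong classicality (a) with the formulas $B$ and $A \land B$ to conclude
\[
	\models_w \bigcirc_a B \leftrightarrow \bigcirc_a (A \land B),
\]
from which the desired implication $\bigcirc_a B \rightarrow \bigcirc_a (A \land B)$ is immediate. Part (b) is analogous: the hypothesis $\jonesBox{} A$ yields $\vp(w) \subseteq \norm{A}$, so $\vp(w) \subseteq \norm{B \leftrightarrow A \land B}$, i.e., $\jonesBox{}(B \leftrightarrow A \land B)$, and we conclude by Strong classicality (b) (noting that the right-hand operator on (b) should be $\bigcirc_i$, matching the pattern of the other results).

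I do not expect any real obstacle: the entire content is the elementary observation that whenever we restrict attention to a set on which $A$ already holds, the formulas $B$ and $A \land B$ have the same truth set, so the previous classicality result applies directly; no appeal to conditions \eqref{cj1}--\eqref{cj4} beyond what is already encapsulated in Strong classicality is needed.
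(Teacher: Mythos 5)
Your proof is correct and is essentially the paper's own argument: both rest on the single observation that $\va(w)\subseteq\norm{A}$ forces $\va(w)\cap\norm{B}=\va(w)\cap\norm{A\land B}$, the paper verifying the two conjuncts of the truth condition of $\bigcirc_a$ directly from this identity while you package the same identity as an instance of the preceding Strong classicality result (which is arguably the cleaner way to justify the label ``Corollary''). Your reading of part (b) --- that the consequent should involve $\bigcirc_i$ and $\vp$ in place of $\bigcirc_a$ and $\va$ --- matches the paper's evident intent, since it disposes of (b) with ``is similar.''
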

				\begin{proof}
					(a) We have $\va(w) \subseteq \norm{A}$ and thus $\norm{A \land B} \cap \va(w) = \norm{B} \cap \va(w)$ which
					gives
					\[
						\norm{A \land B} \in \pii(\va(w)) \iff \norm{B} \in \pii(\va(w)).
					\]
					Moreover, we have
					\[
						\va(w) \cap \norm{\neg B} = \emptyset \Rightarrow \va(w) \cap \norm{\neg (A \land B)} = \emptyset.
					\]
					(b) is similar.
				\end{proof}

				The corollary may seem strange. If it is a factual necessity that there is a war
				in the world, and a factual duty to make sure that old ladies are helped crossing the street,
				does it follow that there is a factual duty to make sure that there is war and
				old ladies are helped across the street?

			\newpage
			\section{Analysis of \CJ\ and related systems}
				\subsection{The operators}
					In \cite{CJ96} the authors employ five operators:
					\begin{itemize}
						\item $\jonesBox{\rightarrow}$A (Weak necessity, \emph{It is not an actual possibility that not $A$})
						\item $\jonesBox{}$ A          (Strong necessity, \emph{It is not a potential possibility that not $A$})
						\item $\bigcirc_a$ A (Actual duty)
						\item $\bigcirc_i$ A (Ideal duty)
						\item $\bigcirc (B\mid A)$ (Conditional duty)
					\end{itemize}
					Note that the notation is inspired by probability theory, where $P(B\mid A)$
					often denotes the probability that $B$, given that $A$.
				\subsection{
					$(\forall x \in W)(\vp(x) = W)$
				}
					The way the system \CJ\ is defined, we may distinguish the following:
					\begin{itemize}
						\item The actually possible, $\models_x \jonesDiamond{\rightarrow} A$
						\item The potentially possible, $\models_x \plainDiamond A$
						\item That which is contingent but not potentially possible,
							($\models_x \jonesBox{} \neg A)\land(\exists y \in W)(\models y A$)
						\item The contradictory, $\models \neg A$
					\end{itemize}
					The effect of letting $\vp(x) = W$ is that the category for propositions that are not
					contradictory but still not potentially possible disappears. None of the scenarios
					discussed in \cite{CJ96} have any use for this category. Letting $\vp(x) = W$ is a
					simplification in the sense that we do not need to worry about what $\vp$ should be,
					and a anti-simplification in the sense that the definition of the semantics for the system
					becomes one line longer.

					It follows directly from the truth conditions that $\jonesBox{}$ and $\jonesBox{\rightarrow}$ are normal operators.
					Indeed, we can let
					\[
						xR_a y \iff y \in \va(x)
					\]
					and
					\[
						xR_i y \iff y \in \vp(x)
					\]
					where
					\[
						xR_a y (xR_i y)
					\]
					is read \emph{$y$ is an actual (potential) alternative to $x$}.

					The condition $w \in \va(w) \subseteq \vp(w)$ ensures that
					$\jonesBox{}$ and $\jonesBox{\rightarrow}$ satisfy the axiom T:
					\[
						( \jonesBox{}  A  \rightarrow  A)  \land  ( \jonesBox{\rightarrow} A  \rightarrow  A)\tag{T}
					\]
					$\bigcirc_a$ and $\bigcirc_i$
					are classical operators, and $\bigcirc (\cdot\mid\cdot)$ is classical with respect to each of its
					arguments (the antecedent and the consequent). This follows from the fact that its
					arguments, propositions $A$ and $B$, in the truth conditions only appear as $\norm{A}$ and $\norm{B}$.

					An operator $C$ is classical if it satisfies
					\[
						\norm{A} = \norm{B} \Rightarrow \vdash_x C(A) \iff C(B).
					\]
					For $\bigcirc_a$ and $\bigcirc_i$ we have
					violability ($\models \neg \bigcirc_{a,i}$) and
					fulfill-ability ($\models \neg \bigcirc_{a,i}\bot$)
					as well as closure under conjunction: the schema
					\[
						\tag{C} (\bigcirc_a A)  \land  (\bigcirc_a B)  \rightarrow  \bigcirc_a (A  \land  B).
					\]

					Violability follows from the truth conditions, whereas fulfill-ability and closure
					under conjunction are (1) and (3) in the semantics for $\pii$. The converse
					implication, the schema
					\[
						\tag{M} \bigcirc_a (A \land B) \rightarrow (\bigcirc_a A) \land (\bigcirc_a B),
					\]
					is in contradiction with violability;
					\[
						\bigcirc_a A \Rightarrow \bigcirc_a (A \land \top) \Rightarrow \bigcirc_a ( \top),
					\]
					so then, if there is an actual duty, all tautologies become obligatory.

				\subsection{$\bigcirc (B\mid A)$ is a classical conditional $A \Rightarrow B$}
					The truth condition for a classical conditional is
					\[
						\models_x A \Rightarrow B\quad \text{iff}\quad \norm{B} \in f(x, \norm{A} )
					\]
					for suitable $f$, whereas we in \CJ\ have
					\[
						\models_x A \Rightarrow \bigcirc  B
					\]
					(i.e., $\models_x \bigcirc (B\mid A))$ only if $\norm{B} \in \pii( \norm{A} )$,
					that is, independently of the possible world $x$. To replace \emph{only if} by \emph{iff} we define
					\begin{eqnarray*}
						f(x, X) = \{\,Y : (X \cap Y = \emptyset) \land\\
						(\forall Z)(Z \subseteq X \land Z \cap Y = \emptyset \rightarrow Y \in \pii(Z))\,\}
					\end{eqnarray*}
					which proves the proposition in the header: because then
					\[
						\models_x A \Rightarrow B
						\quad\text{iff}\quad
						\norm{B} \in f(x, \norm{A} )
						\quad\text{iff}\quad
						\norm{B} \in f(y, \norm{A} ).
					\]
					$\bigcirc (\cdot\mid\cdot)$ is still independent of $x$, as $x$ only
					appears formally as argument of $f$: we have $(\forall x, y \in W)(f(x, X) = f(y, X))$.

					Implication follows from equivalence, but closure under equivalence follows from closure under
					given that $\bigcirc $ commutes with $\land$, i.e.,that we have the schemata $M$ and
					$C$. Perhaps we may say in a sketchy way that closure under a weak condition is
					a stronger demand than closure under a strong condition.

				\subsection{Deontic and factual detachment}
					Deontic detachment is the schema
					\[
						\tag{\DD}
						\bigcirc (B\mid A) \rightarrow (\bigcirc A \rightarrow \bigcirc B)
					\]
					whereas factual detachment is
					\[
						\tag{\FD}
						\bigcirc (B\mid A) \rightarrow (A \rightarrow \bigcirc B).
					\]
					Deontic detachment suggests a reading of $\bigcirc (B\mid A)$ as
					\emph{it is obligatory that $B$, given that it is obligatory that $A$}, or
					\emph{ideally, it is obligatory that $B$ given that $A$}.

					Factual detachment suggests a reading of $\bigcirc (B\mid A)$ as \emph{it is obligatory that $B$,
					given that $A$ is the case}, or
					\emph{it is factually obligatory that $B$ given that $A$}.

					Carmo and Jones write that (FD) and (DD) allow for the deduction of, respectively,
					actual and ideal duties. The in contrast with their own system \CJ\ in which one
					in fact can derive both actual and ideal duties using limited versions of
					each of (FD) and (\DD). We may say that within deontic logic there is a
					(DD)-school, an (FD)-school and an (SA)-school, where (SA), strengthening of the antecedent,
					is the schema
					\[
						(A \Rightarrow \bigcirc B) \rightarrow (A \land A \Rightarrow \bigcirc B)
					\]
					i.e.,
					\[
						\bigcirc (B\mid A) \rightarrow \bigcirc (B\mid A \land A ).
					\]
					The schema (SA) is ill-fitting for, e.g., a conditional for typical circumstances,
					since the circumstances in which $A \land A'$ can be atypical considered as
					circumstances in which $A$. The modality \emph{under typical circumstances} has a parallel
					in deontic logic, \emph{under ideal circumstances}. If we allow both (FD) and
					(DD) we can deduce a moral conflict from the Chisholm set, and (SA) is
					incompatible with (FD):
					\[
						A \land C \land \bigcirc (B\mid A) \land \bigcirc (\neg  B\mid A \land C) \vdash_{\FD,\SA}
					\]
					\[
						\bigcirc B \land \bigcirc \neg B
					\]

					In \cite{CJ96}, limited versions of (FD) and (DD) hold, each in a version
					for factual and a version for ideal duties. (SA) holds under the condition that
					$\plainDiamond (A \land A' \land B)$.
					This follows directly from the truth conditions for $\bigcirc (\cdot\mid\cdot)$, since
					\[
						(\forall X)(X \subseteq \norm{A} \land X \cap \norm{B} = \emptyset \rightarrow \norm{B} \in \pii(X))
					\]
					implies
					\[
						(\forall X)(X \subseteq \norm{A \land B} \land X \cap \norm{B} = \emptyset \rightarrow \norm{B} \in \pii(X)).
					\]
					The logic \CJ\ thus does not take into account the problem associated with (SA). This
					means nothing more than: \CJ\ is a system with a limited purpose. When a duty
					is not fulfilled it could indeed be chalked up to two reasons: the duty only applies
					\emph{ceteris paribus}, or it has been violated. Only the latter possibility
					is treated in \cite{CJ96} and constitutes the core of Chisholm's problem.

				\subsection{Casta\~{n}eda and the distinction between practitions and propositions}
					Hector-Neri Casta\~{n}eda has made many contributions to deontic logic, among which one
					from \cite{MR645173} is considered here. The central idea is to distinguish
					between practitions and propositions. Practitions are expressions that grammatically
					tend to come after words like \emph{shall} and \emph{ought}. One lets the deontic operator
					be a function from the class of practitions to the class of propositions. That is,
					the proposition \emph{He ought to go help his neighbor} is analyzed as
					\[
						\Ought(\text{he}, \text{to go help the neighbor})
					\]
					and not as
					\[
						\Ought(\text{He goes to help the neighbor}).
					\]
					Casta\~{n}eda formalizes Chisholm's paradox as
					\begin{itemize}
						\item $\bigcirc A$
						\item $\bigcirc (a \rightarrow B)$
						\item $\neg  A \rightarrow \neg \bigcirc B$
						\item $\neg  A$
					\end{itemize}
					where practitions are written using capital letters ($A$), and propositions using lower-case
					letters ($a$).

					He uses the axiom
					\[
						\tag{KA} \bigcirc (A \rightarrow B) \rightarrow (\bigcirc A \rightarrow \bigcirc B)
					\]
					where $A$ and $B$ are practitions.

					We can deduce that $\neg  \bigcirc B$.
					We cannot deduce that $\bigcirc B$, since we do not have $\bigcirc (A \rightarrow B)$,
					but $\bigcirc (a \rightarrow B)$. We distinguish between the proposition \emph{he goes to help the neighbor}
					and the practition \emph{(he)(to go help the neighbor)}.

					Casta\~{n}eda admits that Chisholm's paradox can also be solved in other ways,
					but then only by creating unnecessarily complicated systems. Among these he also counts CJ.
					But he writes that Chisholm's paradox cannot arise in systems where one distinguishes between propositions and practitions.

					\paragraph{The objection to Casta\~{n}eda.}
						In the dog scenario it does not seem to be
						necessary to use practitions. If Casta\~{n}eda accepts the axiom
						\[
							\tag{Ka} \bigcirc (a \rightarrow b) \rightarrow (\bigcirc a \rightarrow \bigcirc b)
						\]
						he does not get rid of Chisholm's problem, and $Ka$ seems just as reasonable as $KA$.

				\subsection{The trajectory of philosophical logic}

					\begin{table*}[t]
						\centering
						\begin{tabular}{|c|c|c|}
							\hline
								& Monadic & Dyadic (C)\\
							\hline
							\hline
						Normal (K)
							& $\models \bigcirc (A \land B) \leftrightarrow \bigcirc A \land \bigcirc B$
							& $\models \bigcirc (A \land B\mid \top )\leftrightarrow  \bigcirc (A\mid \top ) \land \bigcirc (B\mid \top )$\\
						\hline
						Classical (E)
						    & $\bigcirc (B\mid A)$ not captured as & no problems\\
						    &  $A \rightarrow \bigcirc B$ nor as $\bigcirc (A \rightarrow B)$ & \\
						\hline
						\end{tabular}
						\caption{Problems with normal and monadic representation of deontic logic.}\label{tabell 2}
					\end{table*}

					The development within philosophical logic seems to be in the direction of greater complexity of syntax
					and semantics (see Table \ref{tabell 2}). Some argue for the use of quantifiers,
					temporal operators and action operators, while others like \cite{CJ96} take a more abstract approach
					while still finding a need for a rich conceptual apparatus.
					\CJ\ is also a fairly weak logic.
					The best label we can put on \CJ\ according to Chellas' classification \cite{MR556867}
					seems to be CECD': conditional logic (C) which is classical (E) and has
					the schemata (C) and (D'):
					\[
						\tag{D'} \neg\bigcirc\bot
					\]
					If one is to extrapolate from this one may end up with
					maxims like
					\begin{quote}
						The richer the semantics, the better.
					\end{quote}
					\begin{quote}
						The weaker the logic, the better.
					\end{quote}
					as a limiting case: For each semantics for deontic logic theere is then a richer
					semantics more suited for the purpose.

					In the transition from normal to classical systems, i.e., from standard models
					to minimal models, one obtains a richer semantics and a weaker logic. The same can be said
					about the transition from classical monadic systems to classical
					conditional logic systems like \CJ. The central deontic component in
					the semantics here, $\pii$, is a higher-level object in the following sense: where one in
					the original semantics of Kripke et al.\  only encounters semantic expressions
					of complexity $x \in \va(y)$, in \CJ\ one also finds, e.g., $X \in \pii(Y)$. Since
					here $X, Y \subseteq W = \norm{\top}$, i.e., $X, Y \in \mathscr P(W)$, we may say that we have applied
					the power set operation once to the semantics. See Table \ref{tabell 1}.

					\begin{table}
						\centering
						\begin{tabular}{|c|c|c|}
							\hline
							& Monadic & Dyadic\\
							\hline
							\hline
							Normal & $W \times W$ & $W \times W \times \mathscr P(W)$\\
							\hline
							Classical& $W \times \mathscr P(W)$& $W \times \mathscr P(W) \times \mathscr P(W)$\\
							\hline
						\end{tabular}
						\caption{Sets that the semantic function $f$ essentially is a subset of.}\label{tabell 1}
					\end{table}

					\paragraph{Lewis' contrafactual conditionals.}
						As with \CJ, Lewis'
						contrafactual conditional logic can be viewed as a classical conditional logic.

						Lewis' logic has an application as deontic conditional logic. Lewis' system
						also illustrates how semantics richer than Kripke semantics can be useful.
						In Lewis' system we have
						\[
						\models_x \text{If $A$ had been the case, then $B$ would have been the case}
						\]
						(a subjunctive conditional) interpreted as
						\begin{quote}
							There exist worlds where $A \land B$ that are closer to $x$ than all worlds in which $A\land \neg B$
						\end{quote}
						The deontic variant is
						\begin{quote}
							There exist worlds where $A \land B$ that are better,
							from the point of view of $x$, than all worlds in which $A\land \neg B$
						\end{quote}
						It is interesting how easy it is to incorporate ethical relativism in deontic
						logic: just let the truth value of a deontic proposition be dependent upon $x$.
						Propositions of the form $\bigcirc (B\mid A)$ in \CJ\ do not satisfy this in principle,
						but in practice the best fitting $X$ in the condition
						\[
							( \forall X)(X \subseteq \norm{A} \land X \cap \norm{B} = \emptyset \rightarrow \norm{B} \in \pii(X))
						\]
						will often be $\va(x)$ or $\vp(x)$. That is because we tend to have side conditions to $\bigcirc (B\mid A)$ of the
						type $\models_x \jonesBox{\rightarrow}$A and $\models_x \jonesDiamond{\rightarrow}$B, which gives exactly
						\[
							\va(x) \subseteq \norm{A} \land \va(x) \cap \norm{B} = \emptyset.
						\]
						The fact that Lewis' logic is a classical conditional logic we may obtain by letting
						\[
							f(x, X) = \{\,
								Y : ( \exists a \in X \cap Y )( \forall b \in X \setminus Y )
								(a \le_x b)
							\,\}
						\]
						where $a \le_x b$ means that $a$ is better than $b$ seen from $x$, morally speaking.
	\newpage
	\section{\CJ-models}
		\subsection[Counter-model for
			\texorpdfstring{$\bigcirc (B\mid A) \land \Diamond A \rightarrow \Diamond B$}%
			{$\bigcirc (B\mid A) \land \Diamond A \rightarrow \Diamond B$}
		]{
			Counter-model for $\bigcirc (B\mid A) \land \plainDiamond A \rightarrow \plainDiamond B$
		}
			Which \emph{Ought implies can} principles do we have in \CJ?
			Of course, we have
			\[
				\bigcirc_a A \rightarrow  \jonesDiamond{\rightarrow} A
			\]
			and
			\[
				\bigcirc_i A  \rightarrow \plainDiamond  A.
			\]
			But whether we should have, e.g.,
			$\bigcirc (B\mid A) \land \plainDiamond A \rightarrow \plainDiamond B$
			is a matter of interpretation.
			Here we shall show that this schema is not valid in \CJ,
			which is symptomatic for $\bigcirc (\cdot\mid\cdot)$'s independence of $\va$ and $\vp$.
			Generally, the diagnosis seems to have to be that in \CJ,
			\emph{Ought implies can} holds
			if \emph{can} is understood as logical possibility, but not
			if \emph{can} is understood as potential possibility.
			According to {\AA}qvist \cite{MR844606} there are two things one can
			ask in connection with Kant's \emph{Ought implies can} principle:
			what is meant by \emph{implies} and what is meant by \emph{can}.
			According to {\AA}qvist the most natural answer is that
			\emph{implies} is understood as logical consequence
			(the concept that philosophical logicians are striving to formalize) whereas
			\emph{can} is understood as a practical possibility,
			which in \CJ\ would mean $\plainDiamond$  or $\jonesDiamond{\rightarrow}$.

			\begin{thm}\label{counter-model}
				$\mathsf{CJ}\not\models \bigcirc (B\mid A) \land \plainDiamond A \rightarrow \plainDiamond B$.
			\end{thm}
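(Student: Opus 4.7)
The plan is to exploit a structural decoupling in the semantics: the truth of $\bigcirc(B\mid A)$ is a statement about $\norm{A}$, $\norm{B}$, and $\pii$ alone, with no reference to $\va(w)$ or $\vp(w)$, whereas $\plainDiamond A$ and $\plainDiamond B$ are read off directly from $\vp(w)$. I will therefore build a model in which $\vp(w)$ meets $\norm{A}$ but misses $\norm{B}$, and then tune $\pii$ so that $\bigcirc(B\mid A)$ is witnessed through $B$-worlds sitting \emph{outside} $\vp(w)$.

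Concretely, I propose $W = \{w_1, w_2\}$, $\va(w_i) = \vp(w_i) = \{w_i\}$ for $i = 1,2$, and a valuation with $\norm{A} = W$ and $\norm{B} = \{w_2\}$. Then at $w_1$ we have $\vp(w_1) \cap \norm{A} = \{w_1\} \ne \emptyset$ and $\vp(w_1) \cap \norm{B} = \emptyset$, so $\plainDiamond A$ holds while $\plainDiamond B$ fails. It remains to define $\pii$ so that $\bigcirc(B\mid A)$ holds and (\ref{cj1})--(\ref{cj4}) are all satisfied.

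The truth-condition for $\bigcirc(B\mid A)$ forces $\{w_2\} \in \pii(X)$ for every $X \subseteq \norm{A} = W$ meeting $\norm{B}$, that is, for $X \in \{\{w_2\}, W\}$. A convenient choice that handles the remaining contexts too is
\[
\pii(X) \;=\; \begin{cases} \{Y \subseteq W : w_2 \in Y\} & \text{if } w_2 \in X,\\ \{Y \subseteq W : w_1 \in Y\} & \text{if } X = \{w_1\},\\ \emptyset & \text{if } X = \emptyset.\end{cases}
\]
Condition (\ref{cj1}) is immediate. Relevance (\ref{cj2}) holds because in each non-empty clause, membership of $Y$ in $\pii(X)$ depends only on whether the designated element of $X$ lies in $Y$, which is a function of $Y \cap X$. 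Closure under intersection (\ref{cj3}) is equally straightforward.

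The main obstacle is the transfer condition (\ref{cj4}), which I would check by enumerating $Y$: if $Y = \emptyset$, $\pii(Y)$ is empty and nothing is required; if $Y$ is a singleton, then the hypotheses $X \subseteq Y$ and $X \in \pii(Y)$ together force $X = Y$, so $(Z \setminus Y) \cup X = Z$, and one reads off $Z \in \pii(Z)$ from the definition in each of the two possibilities for $Z \supseteq Y$; and if $Y = W$, then $Z = W$, $(Z \setminus Y) \cup X = X$, and the required $X \in \pii(W)$ is precisely the hypothesis $X \in \pii(Y)$. This completes the verification and yields a \CJ-model witnessing $\bigcirc(B\mid A) \land \plainDiamond A \land \neg\plainDiamond B$ at $w_1$.
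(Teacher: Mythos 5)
Your proposal is correct and is essentially the paper's own counter-model: the same two worlds and the same valuation ($\norm{A}=W$, $\norm{B}$ a singleton disjoint from $\vp(w_1)$), with a $\pii$ that differs only in the inessential value at $\{w_1\}$ (the paper takes $\pii(\{w_1\})=\emptyset$, you take the principal filter at $w_1$; both pass (\ref{cj1})--(\ref{cj4})). Your case analysis for (\ref{cj4}) and the observation that $\bigcirc(B\mid A)$ is decoupled from $\va,\vp$ match the paper's argument.
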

			\begin{proof}
				To define our counter-model, let
				\begin{eqnarray*}
					W                               &=& \{w, y\}\\
					\norm{A}                        &=& W\\
					\norm{B}                        &=& \{y\}\\
					\vp(w) = \va(w)                 &=& \{w\}\\
					\vp(y) = \va(y)                 &=& \{y\} \text{ (say)}\\
					\pii( \emptyset ) = \pii(\{w\}) &=& \emptyset\\
					\pii(W) = \pii(\{y\})           &=& \{\{y\}, W\}\\
				\end{eqnarray*}
				Then we will get
				\[
					 \models_w \bigcirc (B\mid A) \land  \plainDiamond  A \land  \jonesBox{} \neg  B.
				\]
				The prove this we go through the semantic
				conditions for \CJ\ and check whether the model agrees with them all.

				The condition (\ref{cj2})
				\[
					\tag{\ref{cj2} revisited}
					(Y  \cap X  = Z  \cap  X)  \rightarrow  (Y  \in  \pii(X) \iff Z  \in  \pii(X))
				\]
				is trivially
				(in fact, in the sense of propositional calculus) satisfied for
				$X$ = $\emptyset$ and $X$ = \{w\} since
				$\models_{PC} p \rightarrow (\bot\iff\bot)$.)

				For $X$ = W we get just a set-theoretical fact,
				\[
					Y = Z  \rightarrow  (Y  \in  \pii(W)  \iff Z  \in  \pii(W))
				\]
				which entails no new information.

				For $X = \{y\}$ we must try $Y = \{y\}$ and then we get
				\[
					\{y\}  \cap  \{y\} = Z  \cap  \{y\}  \rightarrow  (\{y\}  \in 
					\pii(\{y\}) \iff Z  \in  \pii(\{y\}))
				\]
				so if $Z = W$ we get $W \in \pii(\{y\})$ which we already know.

				The condition (\ref{cj3})
				\[
				 	\tag{\ref{cj3} revisited}
					(Y  \in  \pii(X)  \land  Z  \in  \pii(X))  \rightarrow
					(Y  \cap  Z  \in  \pii(X))
				\]
				says that $\pii(X)$
				is closed under intersection, and this we can see is satisfied since
				$\{y\} \cap W = \{y\}$.

				For the condition (\ref{cj4})
				\[
					\tag{\ref{cj4} revisited}
					(X  \subseteq  Y  \subseteq  Z)  \land  (X  \in  \pii(Y ))  \rightarrow
					(Z  \setminus  Y )  \cup X \in  \pii(Z),
				\]
				we must choose $Y$ such that $\pii(Y ) = \emptyset$.

				Case (i): $Y = W$. Then we must have $Z = W$ and we just get
				\[
					 X \in  \pii(W)  \rightarrow X \in  \pii(W).
				\]
				Case (ii): $Y = \{y\}$.
				In order that $X \in \pii(Y )$ and $X \subseteq Y$ we must have $X = Y$.
				With $Z = X$ we then get a tautology whereas
				with $Z = W$ we get $W \in \pii(W)$ which we already have.

				Thus all the semantic conditions are satisfied,
				and the Theorem has been proved.
			\end{proof}
			If we also include the condition
			\begin{equation}
				\label{cj5}
				(Z  \in  \pii(X)) \land
				(Y  \subseteq  X) \land
				(Y  \cap Z \ne \emptyset )  \rightarrow  (Z  \in  \pii(Y ))
			\end{equation}
			then $\pii(\{w\}) = U(\{w\})$, where $U(X) = \{\,Y : X \subseteq Y \,\}$.

		\newpage
		\subsection{A three-point model $C_3$ for the Chisholm set}\label{three-point}
			Consider the following sentences $A$ and $B$.
			\[
				\text{He goes to help his neighbor.}\tag{$A$}
			\]
			\[
				\text{He says that he is coming.}\tag{$B$}
			\]
			A reasonable interpretation of the Chisholm set, mention in \cite{CJ96},
			suggests that the following propositions hold:
			\begin{align*}
				\bigcirc (A\mid \top )\\
				\bigcirc (B\mid A)\\
				\bigcirc (\neg  B\mid \neg  A)\\
				\jonesBox{\rightarrow}\neg  A\\
				\neg  B\\
				\jonesDiamond{\rightarrow} B\\
				\plainDiamond (A \land B)
			\end{align*}
			Thus, he has decided to not go to help,
			and so far he is not saying that he is coming,
			but he has not yet decided whether to say that he is coming,
			and it is potentially possible that he both goes to help and
			says that he is coming.

			The following model $C_3$ with just three points preserves all these distinctions.

			(The verification of that fact is too detailed to give here.)

			To indicate the simplest part of $C_3$ first: Let
			\begin{eqnarray*}
				W        &=& \{x, y, z\}\\
				\va(x)   &=& \{x, y\}\\
				\vp(x)   &=& W\\
				\norm{A} &=& \{z\}\\
				\norm{B} &=& \{y, z\}
			\end{eqnarray*}
			This is the only way to give the propositions $A$ and $B$ the correct modal status
			using only three points.

			We want the extended Chisholm set to be satisfied in the point $x$. Regarding
			$y$ and $z$ we may, for example, let
			$\va(y) = \vp(y) = \{y\}$ and
			$\va(z) = \vp(z) = \{z\}$.

			The hard part is to find a $\pii$ which is stabile under
			the four semantic conditions on $\pi$.

			After some computer computation we ended up with
			\begin{eqnarray*}
				\pii( \emptyset ) = \pii(\{y\})                         &=&  \emptyset\\
				\pii(\{z\}) = \pii(\{x, z\}) = \pii(\{y, z\}) = \pii(W) &=& U(\{z\})\\
				\pii(\{x\}) = \pii(\{x, y\})                    &=& U(\{x\})
			\end{eqnarray*}
			We see that $\norm{A \iff B}$ is the smallest context that
			belongs to each nonempty $\pii(X)$, which is reasonable.
			We also see that in the context where he does not go,
			but does say that he is coming,
			we can do without duties whatsoever, i.e., $\pii( \norm{\neg A \land B} ) = \emptyset$.

			The pragmatic oddity has been avoided in the sense that
			\[
				( \forall X)(  \norm{\neg  A  \land  B} =
				\{y\}  \in  \pii(X)).
			\]
			It is more tricky to design models for classical than for
			normal conditional logic.
			But it can be done. Creativity does not seem to be required;
			one starts with all relations empty and looks at each condition
			to see what needs to be added to the relations.
			Additionally one must check that the model does not become too rich.
			Two disjoint sets
			cannot both be in $\pii(X)$, for example, as this would contradict
			\[
				\tag{\ref{cj3} revisited}
				(Y  \in  \pii(X)  \land  Z  \in  \pii(X))  \rightarrow
				(Y  \cap  Z  \in  \pii(X))
			\]
			and
			\[
				\tag{\ref{cj1} revisited}  \emptyset \not\in  \pii(X).
			\]
			But again, it is easiest to let a computer do this labor.

			Tautologies are never obligatory (mandatory) in \CJ\
			in the sense that we would have $\bigcirc_a \top$ or $\bigcirc_i \top$.
			We may have $\bigcirc ( \top \mid A)$ as that just means that all nonempty contexts $X$
			contained in $A$ satisfy $W \in \pii(X)$. We see that
			$C_3 \models \bigcirc (\top\mid A)$.

		\subsection{Three interpretations of the dog scenario}
			Consider the following sentences.
			\[
				\tag{$D$} \text{There is a dog.}
			\]
			\[
				\tag{$S$} \text{There is a sign.}
			\]
			\[
				\tag{$F$} \text{There is a fence.}
			\]
			We now show how
			the expressive power of \CJ\ can be used in three different cases.
			The deontic component of the dog scenario is
			\[
				\bigcirc ( \neg  D\mid  \top)  \land  \bigcirc (S\mid D)  \land  \bigcirc (F\mid D  \land \neg  S)
			\]
			whereas the alethic components can vary:

			\paragraph{First case:} There is a dog, but not by factual necessity.

			\noindent Premises:
			\[
				D  \land \jonesDiamond{\rightarrow}   \neg  D
			\]
			Conclusion:
			\[
				\viol( \neg  D)  \land  \bigcirc_a  \neg  D
			\]
			\paragraph{Second case:} There is by factual necessity a dog.

			\noindent Premises:
			\[
				 \jonesBox{\rightarrow} D  \land \plainDiamond \neg  D
			\]
			Conclusion:
			\[
				\viol( \neg  D)  \land
				\jonesDiamond{\rightarrow} S  \land
				( \jonesDiamond{\rightarrow}   \neg  S  \rightarrow  \bigcirc_a S)
			\]

			\paragraph{Third case:}
			It is factually necessary that there is a dog and no sign,
			but it is potentially possible that there is a sign and no dog;
			there is no fence but it is factually possible that there is a fence.

			This is meaningful if we interpret factual necessity as the result of a
			decision, and potential possibility as coherence with practical or physical
			possibilities for action. This case constitutes the big test of strength for
			the system as far as whether it can represent notorious sinners --- agents
			who violate those duties that arise when (other) duties have been violated.

			\noindent Premises:
			\[
				      ( \jonesBox{\rightarrow} D)
				\land ( \plainDiamond \neg  D)
				\land ( \jonesBox{\rightarrow}   \neg  S)
				\land ( \plainDiamond  (D  \land  S))
				\land ( \neg  F)
				\land ( \jonesDiamond{\rightarrow} F)
			\]
			Conclusion:
			\[
				\viol( \neg  D)  \land
				\viol(D  \rightarrow  S)  \land
				\viol(D  \land \neg  S  \rightarrow  F)  \land
				\bigcirc_a F
			\]
			which is reasonable. This case incorporates CTCTD, violation of
			contrary-to-duty conditionals.
			The reason that we at all can deduce $\viol$ is Result 2, which
			allows us to deduce $\bigcirc (A \rightarrow B)$ given that $\bigcirc (B\mid A)$.

		\subsection{Calculation of models of \CJ\ using Maple}
			It is of interest to construct a model of the dog-sign-fence scenario,
			in order to judge how the system \CJ\ tackles 2nd Level CTDs.
			To consider this set of sentences abstracted from their meaning, let us relabel alphabetically:
			\[
				\tag{$A$} \text{There is a dog.}
			\]
			\[
				\tag{$B$} \text{There is a sign.}
			\]
			\[
				\tag{$C$} \text{There is a fence.}
			\]
			The set of propositions that we seek a model for is:
			\begin{eqnarray*}
				\bigcirc ( \neg  A\mid  \top)\\
				\bigcirc ( \neg  B\mid  \neg  A)\\
				\bigcirc (B\mid A)\\
				\bigcirc (C\mid A  \land \neg  B)\\
				 \jonesBox{\rightarrow} (A  \land \neg  B)\\
				 \plainDiamond \neg  A\\
				 \plainDiamond  A  \land  B\\
				 \neg  C\\
				 \jonesDiamond{\rightarrow} C
			\end{eqnarray*}
			Can we construct a model consisting of only four worlds? Let
			\[
				W = \{a, b, c, d\},\quad
				A = \{a, b, d\},\quad
				B = \{d\},\quad
				C = \{b\}.
			\]
			Let $\va(a) = \{a, b\}$ and $\vp(a) = W$. We would like the set of propositions
			to be satisfied in world $a$. \emph{Ipso facto}, the propositions where $\pii$ does not
			occur in the corresponding truth conditions are satisfied.

			Using \textsf{Maple} we input some values for $\pii$ obtained from
			the truth conditions of the four propositions that
			are of the form $\bigcirc (\cdot\mid\cdot)$ in our extended Chisholm set.
			The extension of this to a complete definition of $\pii$
			was best left to a computer.
			We found that
			$\pii(X) = \emptyset$ except for the case where
			$X$ = $\emptyset$ and $X = \{a\}$.
			This is quite satisfying since
			\[
				\{a\} = \norm{A \land  \neg  B  \land  \neg  C}.
			\]
			Given that we are stuck in a context where there is a dog, no sign, and no fence,
			we are in a hopeless situation which deontically speaking
			is best compared to the contradictory context $\emptyset$.

		\newpage
		\subsection{A new result}
			In \cite{CJ96} the authors mention that (\ref{cj5}),
			\[
				\tag{\ref{cj5} revisited}
				(Z \in \pii(X)) \land (Y \subseteq X) \land (Y \cap Z \ne \emptyset)
				\rightarrow (Z \in \pii(Y )),
			\]
			might be a reasonable semantic condition on $\pii$ to add to
			(\ref{cj1}), (\ref{cj2}), (\ref{cj3}) and (\ref{cj4}).

			\begin{lem}
				Suppose $\bigcirc (A\mid\top)$ is true in a model of \CJ\ where $W = \{a, b, c\}$ and $\norm{A} = \{a\}$.
				Then $\pii$ must satisfy
				\[
					( \forall X  \subseteq  W)(a  \in X \rightarrow  \pii(X) = U(\{a\})).
				\]
			\end{lem}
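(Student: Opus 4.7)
The plan is to prove both inclusions of the equality $\pii(X)=U(\{a\})$ for each $X\subseteq W$ with $a\in X$. The only concrete piece of data the hypothesis provides comes from unfolding the truth condition for $\bigcirc(A\mid\top)$: since $\norm{A}\cap\norm{\top}=\{a\}\ne\emptyset$ and every $X\subseteq W$ meeting $\{a\}$ must satisfy $\{a\}\in\pii(X)$, I extract the single fact that $\{a\}\in\pii(X)$ for every $X$ containing $a$. Everything else is a matter of manipulating (\ref{cj1})--(\ref{cj4}).

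For $\pii(X)\subseteq U(\{a\})$, suppose $Y\in\pii(X)$. Closure under intersection (\ref{cj3}) applied to $\{a\}\in\pii(X)$ and $Y\in\pii(X)$ gives $\{a\}\cap Y\in\pii(X)$; were $a\notin Y$, this would be $\emptyset\in\pii(X)$, contradicting (\ref{cj1}). Hence $a\in Y$, so $Y\in U(\{a\})$.

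For the reverse inclusion --- the main obstacle --- fix $Y$ with $a\in Y$. Relevance (\ref{cj2}) reduces the task to showing $Y'\in\pii(X)$ whenever $Y'\subseteq X$ and $a\in Y'$, since $Y\cap X$ has this form and $Y\cap X\in\pii(X)\iff Y\in\pii(X)$. The trick is to apply (\ref{cj4}) to a ``complementary'' witness inside $X$: set $Y^{*}=(X\setminus Y')\cup\{a\}$, so that $\{a\}\subseteq Y^{*}\subseteq X$ and $a\in Y^{*}$, hence $\{a\}\in\pii(Y^{*})$ by the opening step. Condition (\ref{cj4}) applied to the triple $(\{a\},Y^{*},X)$ then delivers $(X\setminus Y^{*})\cup\{a\}\in\pii(X)$, and a short set-theoretic calculation shows $(X\setminus Y^{*})\cup\{a\}=Y'$.

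The crux of the proof is spotting this instance of (\ref{cj4}): conditions (\ref{cj1})--(\ref{cj3}) alone only shuffle information among sets with the same trace on $X$ and cannot fabricate genuinely new members of $\pii(X)$, so (\ref{cj4}) must do the lifting from the datum $\{a\}\in\pii(Y^{*})$ up to $Y'\in\pii(X)$. I would also note in passing that the assumption $|W|=3$ plays no essential role in this argument; the same proof establishes $\pii(X)=U(\{a\})$ in any \CJ-model satisfying the hypothesis of the lemma.
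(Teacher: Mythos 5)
Your proof is correct, and it takes a genuinely different route from the paper. The paper's proof of this Lemma is purely computational: it asserts that one can verify the claim ``by hand or using a computer, by going through (\ref{cj1}), (\ref{cj2}), (\ref{cj3}) and (\ref{cj4}) until $\pii$ has stabilized'' --- a closure/fixed-point calculation tied to the specific three-world model. You instead give an explicit deduction: the truth condition for $\bigcirc(A\mid\top)$ yields $\{a\}\in\pii(X)$ for every $X\ni a$; the inclusion $\pii(X)\subseteq U(\{a\})$ follows from (\ref{cj3}) and (\ref{cj1}); and the reverse inclusion is obtained by reducing via (\ref{cj2}) to subsets $Y'\subseteq X$ with $a\in Y'$ and then applying (\ref{cj4}) to the triple $\bigl(\{a\},\,(X\setminus Y')\cup\{a\},\,X\bigr)$, whose conclusion simplifies to exactly $Y'\in\pii(X)$. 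I checked the set-theoretic identity $(X\setminus Y^{*})\cup\{a\}=Y'$ and the hypotheses of each condition at the point of use; everything goes through, including the degenerate cases $Y'=\{a\}$ and $Y'=X$. What your approach buys is twofold: it isolates precisely which condition does the real work (the lifting instance of (\ref{cj4})), and it shows the hypothesis $W=\{a,b,c\}$ is inessential --- the conclusion $\pii(X)=U(\norm{A})$ for all $X$ meeting $\norm{A}$ holds whenever $\norm{A}$ is a singleton, in any \CJ-model. The paper's approach, by contrast, is the one that scales to situations (like the four-world dog--sign--fence model) where no such closed form is guessed in advance.
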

			\begin{proof}
				This can be reproduced by hand or using a computer, by going through
				(\ref{cj1}), (\ref{cj2}), (\ref{cj3}) and (\ref{cj4}) until $\pii$ has stabilized.
			\end{proof}

			\begin{thm}\label{mainThm}
				Suppose $\bigcirc (A\mid\top)$ is true in a model of \CJ\ where $W = \{a, b, c\}$ and $\norm{A} = \{a\}$.
				Then (\ref{cj5}) fails.
			\end{thm}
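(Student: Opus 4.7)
The plan is to assume (\ref{cj5}) for a contradiction and use the preceding lemma together with axioms (\ref{cj1})--(\ref{cj3}) to force $\emptyset \in \pii(\{b,c\})$. The lemma gives $\pii(W) = U(\{a\})$, so in particular both $\{a,b\}$ and $\{a,c\}$ lie in $\pii(W)$. The key idea is that $\{a\}$ itself cannot be transported out of $\pii(W)$ by (\ref{cj5}) because it meets $\{b,c\}$ trivially, but its two ``supersets'' $\{a,b\}$ and $\{a,c\}$ each meet $\{b,c\}$ nontrivially and can be transported; intersecting them via (\ref{cj3}) then recreates $\{a\}$ in a forbidden place.

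Concretely, I would proceed as follows. First, apply (\ref{cj5}) with $Z = \{a,b\}$, $X = W$, $Y = \{b,c\}$: the hypotheses $Z \in \pii(X)$, $Y \subseteq X$ and $Y \cap Z = \{b\} \neq \emptyset$ all hold, yielding $\{a,b\} \in \pii(\{b,c\})$. Next, apply (\ref{cj5}) symmetrically with $Z = \{a,c\}$, $X = W$, $Y = \{b,c\}$, obtaining $\{a,c\} \in \pii(\{b,c\})$. Then (\ref{cj3}) (closure under intersection) gives $\{a,b\} \cap \{a,c\} = \{a\} \in \pii(\{b,c\})$. Finally, since $\{a\} \cap \{b,c\} = \emptyset = \emptyset \cap \{b,c\}$, condition (\ref{cj2}) yields $\{a\} \in \pii(\{b,c\}) \iff \emptyset \in \pii(\{b,c\})$, contradicting (\ref{cj1}).

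There is no real obstacle here; the only creative step is noticing that applying (\ref{cj5}) directly to $Z = \{a\}$ is blocked because $\{a\} \cap \{b,c\} = \emptyset$, so one must first pick the two nontrivial members $\{a,b\}$ and $\{a,c\}$ of $\pii(W)$ supplied by the lemma, transport both into $\pii(\{b,c\})$, and only then intersect. The argument shows that the presence of at least two ``extra'' worlds $b,c$ outside $\|A\|$ is what produces the clash, which suggests that analogous obstructions to (\ref{cj5}) will arise whenever $\bigcirc(A\mid\top)$ holds with $|W \setminus \|A\|\,| \geq 2$.
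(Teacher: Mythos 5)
Your proof is correct and is essentially the paper's own argument: both transport $\{a,b\}$ and $\{a,c\}$ from $\pii(W)=U(\{a\})$ into $\pii(\{b,c\})$ via (\ref{cj5}) and then combine (\ref{cj2}) and (\ref{cj3}) to force $\emptyset\in\pii(\{b,c\})$, contradicting (\ref{cj1}). The only (immaterial) difference is that the paper applies (\ref{cj2}) first to shrink to $\{b\}$ and $\{c\}$ and then intersects, whereas you intersect first to get $\{a\}$ and then apply (\ref{cj2}).
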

			\begin{proof}
				Suppose (\ref{cj5}) does hold. Let
				\begin{eqnarray*}
					Y   &=& \{b, c\},\\
					X   &=& W,\\
					Z_1 &=& \{a, b\},\\
					Z_2 &=& \{a, c\}.
				\end{eqnarray*}
				We deduce
				\begin{eqnarray*}
					Z_1 \in  \pii(Y )  \land  Z_2 \in  \pii(Y )         & \text{from condition } & (5)\\
					\{b\}  \in  \pii(Y )  \land  \{c\}  \in  \pii(Y ) & &(2)\\
					 \emptyset \in  \pii(Y )                          & &(3)\\
					 \bot                                            & &(1)\\
				\end{eqnarray*}
				and the Theorem has been proved.
			\end{proof}
			The semantic conditions we needed in order to get from each line to the next are thus
			(\ref{cj1}), (\ref{cj2}), (\ref{cj3}) and (\ref{cj5}). To prove the Lemma we also needed (\ref{cj4}).

			Thus, we cannot use (\ref{cj5}) together with the other semantic conditions on $\pii$
			to formalize deontic scenarios. The condition (\ref{cj5}) expresses a
			supposition that the mandatory in a context is preserved when passing to a more specific
			context, as long as the fulfillment of the mandate is compatible with the new context.

	\nocite{
		MR0444437,
		lindahl1977position,
		Elgesem,
		MR700544,
		hart1994concept,
		MR3063042}

	\newpage
	\bibliographystyle{plain}
	\bibliography{filhspe1}

	\newpage
	\section{Appendix: Maple}
	\includegraphics[width=14cm]{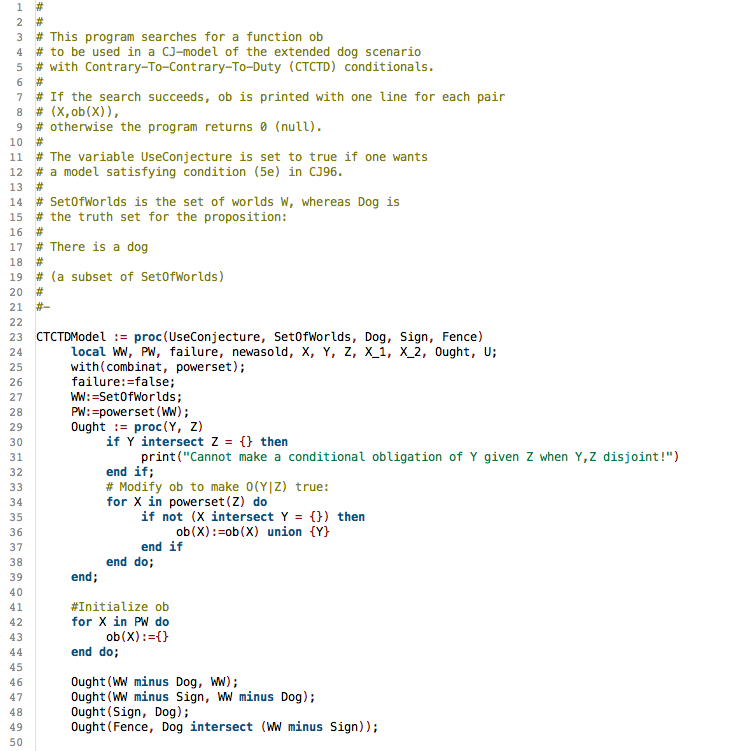}
	\includegraphics[width=14cm]{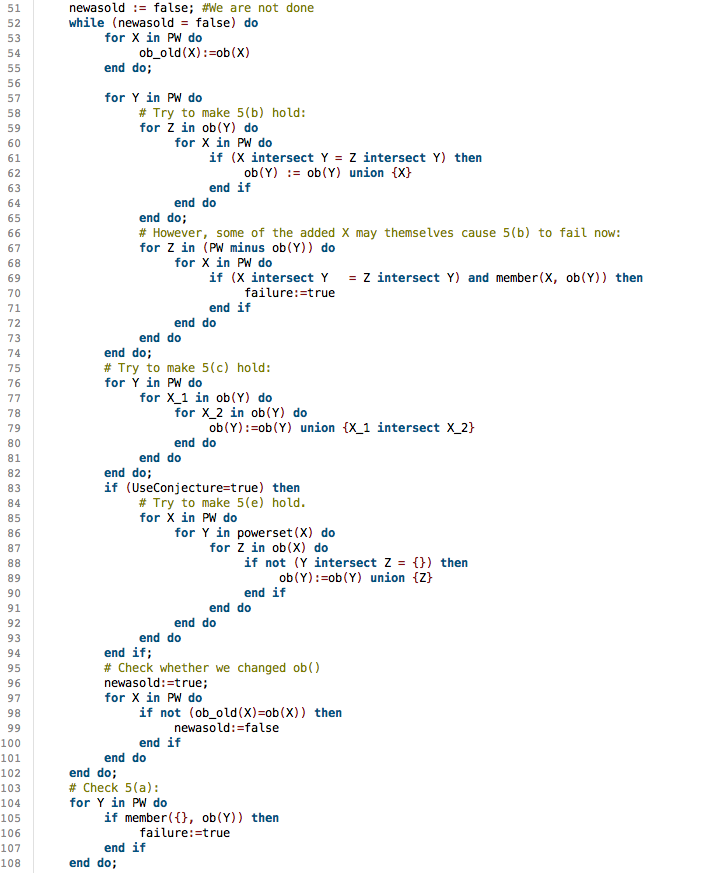}
	\includegraphics[width=14cm]{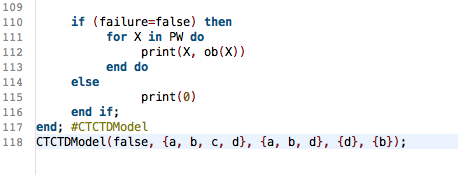}
	\newpage
	\subsection*{Maple code}
	\begin{verbatim}
		#
		#
		# This program searches for a function ob
		# to be used in a CJ-model of the extended dog scenario
		# with Contrary-To-Contrary-To-Duty (CTCTD) conditionals.
		#
		# If the search succeeds, ob is printed with one line for each pair
		# (X,ob(X)),
		# otherwise the program returns 0 (null).
		#
		# The variable UseConjecture is set to true if one wants
		# a model satisfying condition (5e) in CJ96.
		#
		# SetOfWorlds is the set of worlds W, whereas Dog is
		# the truth set for the proposition:
		#
		# There is a dog
		#
		# (a subset of SetOfWorlds)
		#
		#-

		CTCTDModel := proc(UseConjecture, SetOfWorlds, Dog, Sign, Fence)
			local WW, PW, failure, newasold, X, Y, Z, X_1, X_2, Ought, U;
			with(combinat, powerset);
			failure:=false;
			WW:=SetOfWorlds;
			PW:=powerset(WW);
			Ought := proc(Y, Z)
				if Y intersect Z = {} then
					print("Cannot make a conditional obligation of Y given Z when Y,Z disjoint!")
				end if;
				# Modify ob to make O(Y|Z) true:
				for X in powerset(Z) do
					if not (X intersect Y = {}) then
						ob(X):=ob(X) union {Y}
					end if
				end do;
			end;

			#Initialize ob
			for X in PW do
				ob(X):={}
			end do;

			Ought(WW minus Dog, WW);
			Ought(WW minus Sign, WW minus Dog);
			Ought(Sign, Dog);
			Ought(Fence, Dog intersect (WW minus Sign));

			newasold := false; #We are not done
			while (newasold = false) do
				for X in PW do
					ob_old(X):=ob(X)
				end do;

				for Y in PW do
					# Try to make 5(b) hold:
					for Z in ob(Y) do
						for X in PW do
							if (X intersect Y = Z intersect Y) then
								ob(Y) := ob(Y) union {X}
							end if
						end do
					end do;
					# However, some of the added X may themselves cause 5(b) to fail now:
					for Z in (PW minus ob(Y)) do
						for X in PW do
							if (X intersect Y	= Z intersect Y) and member(X, ob(Y)) then
								failure:=true
							end if
						end do
					end do
				end do;
				# Try to make 5(c) hold:
				for Y in PW do
					for X_1 in ob(Y) do
						for X_2 in ob(Y) do
							ob(Y):=ob(Y) union {X_1 intersect X_2}
						end do
					end do
				end do;
				if (UseConjecture=true) then
					# Try to make 5(e) hold.
					for X in PW do
						for Y in powerset(X) do
							for Z in ob(X) do
								if not (Y intersect Z = {}) then
									ob(Y):=ob(Y) union {Z}
								end if
							end do
						end do
					end do
				end if;
				# Check whether we changed ob()
				newasold:=true;
				for X in PW do
					if not (ob_old(X)=ob(X)) then
						newasold:=false
					end if
				end do
			end do;
			# Check 5(a):
			for Y in PW do
				if member({}, ob(Y)) then
					failure:=true
				end if
			end do;

			if (failure=false) then
				for X in PW do
					print(X, ob(X))
				end do
			else
					print(0)
			end if;
		end; #CTCTDModel
		CTCTDModel(false, {a, b, c, d}, {a, b, d}, {d}, {b});
\end{verbatim}
\newpage
\subsection*{Output}
\begin{verbatim}
{}, {}
{a}, {}
{b}, {{b}, {a, b}, {b, c}, {b, d}, {a, b, c}, {a, b, d}, 
{b, c, d}, {a, b, c, d}}
{c}, {{c}, {a, c}, {b, c}, {c, d}, {a, b, c}, {a, c, d}, 
{b, c, d}, {a, b, c, d}}
{d}, {{d}, {a, d}, {b, d}, {c, d}, {a, b, d}, {a, c, d}, 
{b, c, d}, {a, b, c, d}}
{a, b}, {{b}, {b, c}, {b, d}, {b, c, d}}
{a, c}, {{c}, {b, c}, {c, d}, {b, c, d}}
{a, d}, {{d}, {b, d}, {c, d}, {b, c, d}}
{b, c}, {{c}, {a, c}, {c, d}, {a, c, d}}
{b, d}, {{d}, {a, d}, {c, d}, {a, c, d}}
{c, d}, {{c}, {a, c}, {b, c}, {a, b, c}}
{a, b, c}, {{c}, {c, d}}
{a, b, d}, {{d}, {c, d}}
{a, c, d}, {{c}, {b, c}}
{b, c, d}, {{c}, {a, c}}
{a, b, c, d}, {{c}}

	\end{verbatim}

\end{document}